\newcommand{\bs}[1]{\ensuremath{\boldsymbol{#1}}}
\newcommand{\pci}[1]{\ensuremath{\bs{p}_{\boldsymbol{c}_{i}}}\xspace}
\title{Underapproximation of Reach-Avoid Sets for Discrete-Time Stochastic Systems via Lagrangian Methods}
\author{Joseph D. Gleason$^\dagger$, Abraham P. Vinod$^\dagger$, Meeko. M. K. Oishi\thanks{This material is based upon work supported by the National Science
        Foundation under Grant Number IIS-1528047, CMMI-1254990 (Oishi, CAREER),
        CNS-1329878, and the Lighting Enabled 
    Systems and Applications Engineering Research Center (EEC-0812056). Any opinions, findings, and conclusions or recommendations expressed in this material are those of the authors and do not necessarily reflect the views of the National Science Foundation. 
    \newline \indent
    Joseph Gleason, Abraham Vinod, and Meeko Oishi are with Electrical and
Computer Engineering, University of New Mexico, Albuquerque, NM; e-mail:
gleasonj@unm.edu, aby.vinod@gmail.com, oishi@unm.edu (corresponding author)
    \newline \indent
    $\dagger$ These authors contributed equally to this work.}
}
\date{}
\DeclareMathOperator{\boundeddist}{\mathcal{E}}
\DeclareMathOperator{\rad}{\mathcal{D}}
\DeclareMathOperator{\reachf}{\mathscr{F}_1}
\DeclareMathOperator{\reach}{\mathscr{R}_1}
\DeclareMathOperator{\reachd}{\mathscr{R}_1}
\DeclareMathOperator{\viab}{\mathcal{V}}
\DeclareMathOperator{\minkdiff}{\ominus}
\def\ONE{\boldsymbol{1}}
\newtheorem{lemma}{Lemma}
\newtheorem{prop}{Proposition}
\newtheorem{theorem}{Theorem}
\newtheorem{corollary}{Corollary}
\newtheorem{prob}{Problem}
\newenvironment{customthm}[1]
  {\innercustomthm}
  {\endinnercustomthm}
\begin{document}

\maketitle

\begin{abstract}
    We examine Lagrangian techniques for computing underapproximations of
    finite-time horizon, stochastic reach-avoid level-sets for discrete-time,
    nonlinear systems. We use the concept of reachability of a target tube in
    the control literature to define robust reach-avoid sets which are
    parameterized by the target set, safe set, and the set in which the
    disturbance is drawn from. We unify two existing Lagrangian approaches to
    compute these sets and establish that there exists an optimal control policy of the
    robust reach-avoid sets which is a Markov policy. Based on these results, we
    characterize the subset of the disturbance space whose corresponding robust
    reach-avoid set for the given target and safe set is a guaranteed
    underapproximation of the stochastic reach-avoid level-set of interest.
    The proposed approach dramatically improves the computational
    efficiency for obtaining an underapproximation of stochastic reach-avoid
    level-sets when compared to the traditional approaches based
    on gridding. 
    Our method, while conservative, does not rely on a grid, implying
    scalability as permitted by the known computational geometry constraints. We
    demonstrate the method on two examples: a simple two-dimensional integrator,
    and a space vehicle rendezvous-docking problem.
\end{abstract}

\section{Introduction}
\label{sec:introduction}

Reach-avoid analysis is an established verification tool that provides formal
guarantees of both safety (via avoiding unsafe regions) and performance (via
reaching a target set).  It has been used in safety-critical or expensive systems, 
for example, with application to space systems
\cite{lesser2013_spacecraft}, aviation \cite{Tomlin2003,summers2011stochastic}, biomedical systems~\cite{maidens_2013}, and other domains~\cite{KariotoglouECC2011,ManganiniCYB2015,summers2010_verification}. 
The reach-avoid set is the set of initial states for which there exists control that enables the state to reach a target within some finite time horizon, while remaining within a safe set (avoiding
an unsafe set) for all instants in the time horizon.  In a probabilistic system, satisfaction of the reach-avoid objective is accomplished stochastically.  The {\em stochastic reach-avoid level-set} for a given likelihood is the set of states for which probabilistic success of the reach-avoid objective is assured with at least the given likelihood.

The theoretical framework for the probabilistic reach-avoid calculation is based on dynamic programming \cite{abate2008_reachability,summers2010_verification}, and, hence, is computationally infeasible for even moderate-sized systems due to the gridding of not only the state-space, but also of the input and disturbance spaces \cite{AbateHSCC2007}.  Recent work has focused on alternatives to dynamic programming, including    
approximate dynamic programming
\cite{KariotoglouECC2013, KariotoglouSCL2016, ManganiniCYB2015}, Gaussian
mixtures \cite{KariotoglouSCL2016}, particle filters
\cite{lesser2013_spacecraft, ManganiniCYB2015}, and convex chance-constrained
optimization \cite{lesser2013_spacecraft, KariotoglouECC2011}. These methods
have been applied to systems that are at most 10-dimensional, at high memory
and computational costs \cite{ManganiniCYB2015}.  Further, since an analytical expression
of the value function is not accessible, stochastic reach-avoid level-sets can
be computed only up to the accuracy of the gridding. 

We propose a method to compute an underapproximation of probabilistic reach-avoid sets via 
{\em robust reach-avoid sets}, the set of states assured to reach the target set and remain in the safe region despite {\em any} disturbance input.  
Robust reach-avoid sets can be theoretically posed as the solution to the \emph{reachability of a target tube} problem \cite{bertsekas1971minimax,kerrigan2001robust,rakovic2006_reach}, originally framed to compute reachable sets of discrete-time controlled systems with bounded disturbance sets. Motivated by the scalability of the Lagrangian method proposed in~\cite{maidens_2013, saintpierre1994_viability} for viability analysis in deterministic systems (that is, systems without a disturbance input but with a control input), we seek a similar approach to compute the robust reach-avoid sets via tractable set theoretic operations.
Lagrangian methods rely on computational geometry, whose scalability depends on
the representation and the operation used~\cite{guernic2009}, including
polyhedrons (implementable using Model Parametric Toolbox (MPT) \cite{MPT3}), support functions~\cite{leguernic2010}, and ellipsoids (implementable via the Ellipsoidal Toolbox~\cite{ellipsoidal}).

In this paper, we unify these two approaches to create an efficient algorithm for underapproximation of the stochastic reach-avoid set, and demonstrate our approach on practical examples.  Our main contributions are: a) synthesis of the approaches presented in \cite{maidens_2013,saintpierre1994_viability} and~\cite{bertsekas1971minimax,kerrigan2001robust,rakovic2006_reach} to compute the robust reach-avoid sets, b) sufficient conditions under which an optimal control policy for a given robust reach-avoid set is a Markov policy, and c) an algorithm to compute an underapproximation of the stochastic reach-avoid level-sets using the robust reach-avoid sets. 
Specifically, we establish the sufficient conditions under which an optimal control policy is comprised of universally measurable state-feedback laws. For these conditions, we characterize the subset of the disturbance space whose corresponding robust
reach-avoid set is a guaranteed underapproximation of the desired stochastic reach-avoid
level-set. Leveraging established
Lagrangian methods, we demonstrate that our approach dramatically reduces the computation time required for
computing a conservative underapproximation of the desired stochastic
reach-avoid level-set. Further, the Lagrangian methods does not rely on grids, freeing the underapproximated sets from any numerical artifacts arising due to the discretization. 

The remainder of the paper is as follows: Section \ref{sec:problem-statement}
describes the problem and the necessary notation. In Section \ref{sec:lagrangian},
we describe the relationship between the 
the recursion established
in~\cite{bertsekas1971minimax} for the robust reach-avoid set and the Lagrangian approach in~\cite{maidens_2013},
and establish the desired measurability properties of the
optimal controller.  We present an algorithm for underapproximation of stochastic
reach-avoid level-sets in Section \ref{sec:stochastic-approx}.  We demonstrate our
algorithm on two examples---a simple two-dimensional integrator and a space
vehicle rendezvous-docking problem---in Section \ref{sec:examples} and
provide conclusions and directions of future work in Section
\ref{sec:conlusion}.

\section{Problem Statement}
\label{sec:problem-statement}

The following notation will be used throughout the paper:
we denote discrete-time time intervals by $\mathbb{Z}_{[a,b]} = \mathbb{Z} \cap
\{a, a+1, \dots, b-1,b\}$ for $a,b \in \mathbb{Z}$; the set of natural
numbers (including zero) as $ \mathbb{N}$; the Minkowski sum of
two sets $ \mathcal{S}_1, \mathcal{S}_2$ as $ \mathcal{S}_1 \oplus
\mathcal{S}_2=\{s_1+s_2:s_1\in \mathcal{S}_1, s_2 \in \mathcal{S}_2\}$; 
the Minkowski difference (or Pontryagin difference) of
two sets $ \mathcal{S}_1, \mathcal{S}_2$ as $ \mathcal{S}_2 \minkdiff
\mathcal{S}_1=\{s:s+s_1\in \mathcal{S}_2\ \forall s_1 \in \mathcal{S}_1\}$; 
and the indicator function corresponding to a set $ \mathcal{S}$ as $\ONE_{
\mathcal{S}}: \mathcal{X} \rightarrow \{0,1\}$ where $\ONE_{\mathcal{S}}(x)=1$
if $x\in \mathcal{S}$ and is zero otherwise.

\subsection{System formulation}

We consider a discrete-time, nonlinear, time-invariant system with an affine
disturbance,
\begin{equation}
  x_{k+1} = f(x_{k}, u_{k}) + w_{k}
  \label{eq:nonlin-dist}
\end{equation}
with state $x_{k} \in \mathcal{X} \subseteq \mathbb{R}^{n}$, input 
$u_{k} \in \mathcal{U} \subseteq \mathbb{R}^{m}$, disturbance $w_{k} \in \mathcal{W}\subseteq
\mathbb{R}^n$, and a function $f: \mathcal{X}\times \mathcal{U} \rightarrow
\mathcal{X}$. Without loss of generality, we assume $ \mathcal{W}$ contains
$0_{n}$, the zero vector of $ \mathbb{R}^{n}$. We will also consider the discrete, LTI system
of form
\begin{align}
    x_{k+1}&=Ax_k+Bu_k+w_k\label{eq:lin}
\end{align}
for some matrices $A\in \mathbb{R}^{n\times n}$ and $B\in \mathbb{R}^{n\times
m}$. We assume $A$ is non-singular, which holds true for discrete-time systems
that arise from the discretization of continuous-time systems.

\subsection{Robust reach-avoid sets}
\label{sub:RRA}

Let $t\in \mathbb{N}$ and $ \mathcal{F}$ denote the set of admissible state-feedback laws, $\nu:
\mathcal{X} \rightarrow \mathcal{U}$.  We define a control policy as a sequence
of state-feedback laws, $\rho_t = [\nu_{0}(\cdot),\dots,\nu_{t-1}(\cdot)]$ with $\nu_{k}\in
\mathcal{F}$ for $k\in \mathbb{Z}_{[0,t-1]}$. We denote the corresponding set of
admissible control policies as $\mathcal{P}_t$.

Let $\boundeddist\subseteq\mathcal{W}$ be a subset of the disturbance set. We define the $t$-time \emph{robust reach-avoid set} corresponding to $
\boundeddist$ as the set of initial states $x_{0} \in \mathcal{X}$ such that
there exists an admissible control policy $\rho_t\in \mathcal{P}_t$ that ensures
$x_{k}$ remains in a safe set $\mathcal{K}\subseteq \mathcal{X}$ for $k \in
\mathbb{Z}_{[0,t-1]}$ and $x_{t}$ lies in a target set $\mathcal{T}\subseteq
\mathcal{X}$ (reach-avoid objective) despite the presence of the disturbance $w_k\in \boundeddist$ at
each instant. Denoting $\bar{w}_t = {[w_{0}^\top, \dots,
w_{t-1}^\top]}^\top\in\mathcal{W}^{t}$, the $t$-time robust reach-avoid set is
\begin{align} 
  \rad_{t}(\mathcal{T},\mathcal{K},\boundeddist) &= \left\{ x_{0} \in \mathcal{X} :
    \exists \rho_t\in \mathcal{P}_t, \forall\bar{w}_t\in \boundeddist^{t}, \right.\nonumber \\
    &\quad\quad\quad\left.\forall k \in \mathbb{Z}_{[0,t-1]}, x_{k} \in
    \mathcal{K}, x_{t}
\in \mathcal{T} \right\}.
  \label{eq:dra-set}
\end{align}

Note that for $\boundeddist=\{0_{n}\}$, the system \eqref{eq:nonlin-dist} is equivalent to a
deterministic, discrete-time, nonlinear system
\begin{align}
    x_{k+1}&=f(x_k,u_k)\label{eq:nonlin}
\end{align}
when $w_k\in \boundeddist$.
The $t$-time \emph{viable set} of the system \eqref{eq:nonlin} is the set of
initial states $x_{0} \in \mathcal{X}$ such that there exists an admissible
control policy $\rho_t\in \mathcal{P}_t$ such that $x_{k}$ remains in a safe set
$\mathcal{K}$ for $k \in \mathbb{Z}_{[0,t]}$. That is,
\begin{align}
  \viab_{t}(\mathcal{K}) &= \left\{ x_{0} \in \mathcal{X} :
    \exists \rho_t\in \mathcal{P}_t,\forall k \in \mathbb{Z}_{[0,t]}, x_{k} \in \mathcal{K}\right\}\label{eq:viab-set}\\
    &= \rad_{t}( \mathcal{K}, \mathcal{K},\{0_n\}).\label{eq:viab-set-ra}
\end{align}
The authors in~\cite{maidens_2013} presented a Lagrangian formulation to compute
$\viab_t( \mathcal{K})$ and discussed the scalability of the viability analysis using MPT, ET, and support functions.

\subsection{Stochastic reach-avoid level-sets}
\label{sub:SRA}

In this subsection, we further assume the disturbance $w_k$ in
\eqref{eq:nonlin-dist} is an $n$-dimensional random vector defined in the
probability space $(\mathcal{W}, \sigma(\mathcal{W}),\mathbb{P}_{w})$. Here,
$\sigma( \mathcal{W})$ denotes the minimal $\sigma$-algebra associated with the
random vector $w_k$. We assume the disturbance $w_k$ is absolutely
continuous with a probability density function (PDF) $\psi_w$, the disturbance
process ${\{w_k\}}_{k=0}^{N-1}$ is an independent and identically distributed (i.i.d.)
random process, and $N\in \mathbb{N}$ is a finite time horizon. We assume
that $f$ is Borel-measurable, $ \mathcal{U}$ is compact, the sets $ \mathcal{K},
\mathcal{T}$ are Borel, and $\psi_w$ is continuous.

We denote the set of universally measurable state-feedback laws $\mu(\cdot):
\mathcal{X} \rightarrow \mathcal{U}$ as $\mathcal{F}_u$. We define the Markov
control policy as $\pi = [\mu_{0}(\cdot),\dots,\mu_{N-1}(\cdot)]$ where $\mu_{k}
\in \mathcal{F}_u\ \forall k\in \mathbb{Z}_{[0,N-1]}$, and $\mathcal{M}$ is the
set of admissible Markov policies. Since no measurability restrictions were
imposed on the feedback laws in Section~\ref{sub:RRA}, $ \mathcal{F}_u
\subseteq \mathcal{F}$ and $ \mathcal{M}\subseteq \mathcal{P}_N$.

Given a Markov policy $\pi$ and initial state $x_0\in \mathcal{X}$, the
concatenated state vector $\bar{x} = [ x_{1}, \dots, x_{N} ]$ for the system
\eqref{eq:nonlin-dist} is a random vector defined in the probability space
$(\mathcal{X}^{N},\sigma(\mathcal{X}^{N}), \mathbb{P}^{N,\pi}_{\bar{x}})$. The
probability measure $\mathbb{P}^{N,\pi}_{\bar{x}}$ is induced from the
probability measure $ \mathbb{P}_w$ via \eqref{eq:nonlin-dist}~\cite{summers2010_verification}.
We will denote the probability space associated with the random vector
$\bar{x}_{k} = [ x_{k+1}, \dots, x_{N} ]$ as $(\mathcal{X}^{N-k},
\sigma(\mathcal{X}^{N-k}), \mathbb{P}^{N-k,\pi}_{\bar{x}_{k}})$ for $k\in
\mathbb{Z}_{[0,N-1]}$.

For stochastic reachability analysis, we are interested in the maximum
likelihood that the system \eqref{eq:nonlin-dist} starting at an initial state
$x_0\in \mathcal{X}$ will achieve the reach-avoid objective using a Markov
policy. The maximum likelihood and the optimal Markov policy can be determined
as the solution to the optimization problem,
\cite{summers2010_verification}
\begin{equation} \label{eq:ra-value-fcn}
	\sup_{\pi\in \mathcal{M}} \mathbb{E}^{N,\pi}_{\bar{x}}\left[
    \left(\prod_{i=0}^{N-1} \ONE_{\mathcal{K}}(x_{i}) \right) \ONE_{\mathcal{T}}(x_{N})\right].
\end{equation}
A dynamic programming approach was presented in~\cite{summers2010_verification}
to solve problem \eqref{eq:ra-value-fcn}. Let the optimal solution to problem
\eqref{eq:ra-value-fcn} be $\pi^\ast=[\mu_0^\ast(\cdot)\ \ldots\
\mu_{N-1}^\ast(\cdot)]$, the \emph{maximal Markov policy in the terminal
sense}~\cite[Def. 10]{summers2010_verification}. The existence of a Markov policy is guaranteed for a continuous $\psi_w$ and compact $ \mathcal{U}$~\cite[Thm. 1]{VinodArxiv2017}. The approach
in~\cite{summers2010_verification} generates value functions $V_{k}^{\ast}: \mathcal{X}
\rightarrow [0,1]$ for $k\in[0,N]$,
\begin{align}
    V_{k}^{\ast}(x)&= \ONE_{\mathcal{K}}(x)\int_{ \mathcal{X}} V_{k+1}^{\ast}(y)\psi_{w}(y-f(x,\mu_k^\ast(x)))dy \nonumber\\
        &=\ONE_{ \mathcal{K}}(x)\mathbb{P}^{N-k,\pi^\ast}_{\bar{x}_{k}}\left(x_{N}
\in\mathcal{T}, x_{N-1} \in\mathcal{K}, \dots,\right. \nonumber \\
        &\quad\quad\quad\quad\quad\quad\quad\quad\quad\quad\quad\quad\quad\quad\left. x_{k+1}\in\mathcal{K}\vert x \right)\label{eq:valFun}
\end{align}
initialized with
\begin{align}
    V_N^{\ast}(x)&= \ONE_{ \mathcal{T}}(x). \label{eq:valFunN}
\end{align}
By definition, the optimal value function $V_{0}^{\ast}(x_0)$ provides the
maximum likelihood, optimal value of problem \eqref{eq:ra-value-fcn}, of
achieving the reach-avoid objective by the system \eqref{eq:nonlin-dist} for the
time horizon $N$ and the initial state $x_0\in \mathcal{X}$.

For $\beta\in[0,1]$ and $k\in \mathbb{Z}_{[0,N]}$, the \emph{stochastic reach-avoid $\beta$-level-set},
\begin{equation} \label{eq:sra-set}
	\mathcal{L}_{k}(\beta) = \left\{ x\in\mathcal{X} : V_{N-k}^{\ast}(x) \geq
    \beta \right\},
\end{equation}
is the set of states $x$ that achieve the reach-avoid objective by the time
horizon with a probability of, at minimum, $\beta$, in the time interval $
\mathbb{Z}_{[0,k]}$.

\subsection{Problem statements}

The following problems are addressed in this paper:

\begin{prob}
    Construct a recursion for exact computation of the robust reach-avoid sets
    \eqref{eq:dra-set} for the system \eqref{eq:nonlin-dist}.\label{prob:RA}
\end{prob}
\begin{prob}
    Given a set $\boundeddist\subseteq \mathcal{W}$
    and the corresponding robust reach-avoid set \eqref{eq:dra-set},
    characterize the sufficient conditions under which there exists an optimal control policy that
    is a Markov control policy 
    for the system \eqref{eq:nonlin-dist}.
    \label{prob:meas}
\end{prob}
\begin{prob}
     Given $\beta\in[0,1]$, characterize $\boundeddist \subseteq \mathcal{W}$ whose
     corresponding robust reach-avoid set \eqref{eq:dra-set}
     underapproximates the stochastic reach-avoid $\beta$-level-set
     \eqref{eq:sra-set}.\label{prob:SRA}
\end{prob}
\begin{customthm}{3a}
    For a given $\beta \in [0,1]$, characterize an algorithm to compute $\boundeddist$ for
    Problem~\ref{prob:SRA} when the disturbance in \eqref{eq:nonlin-dist} is Gaussian.\label{prob:Gaussian}
\end{customthm}

\section{Robust Reach-Avoid Set Computation}
\label{sec:lagrangian}

In this section, we characterize the robust reach-avoid set for the system
described in (\ref{eq:nonlin-dist}). To solve Problem~\ref{prob:RA}, we first
extend the approach presented in~\cite{maidens_2013,saintpierre1994_viability}
to reproduce the results presented in~\cite{bertsekas1971minimax}. The authors
in~\cite{maidens_2013} demonstrated scalability of the Lagrangian methods for
viability analysis in deterministic systems. By unifying these approaches, we
aim for a tractable and efficient Lagrangian computation of the robust
reach-avoid set with established scalability properties. We also demonstrate that the recursion presented for
the viable set computation in deterministic system~\cite{maidens_2013} is a
special case of the proposed Lagrangian approach. Finally, we solve
Problem~\ref{prob:meas} and establish that there is an optimal control policy
for the robust reach-avoid set that is also a Markov policy.

\subsection{Iterative computation for robust reach-avoid sets}

Similar to the work in~\cite{maidens_2013}, for the system
\eqref{eq:nonlin-dist}, we define the unperturbed, one-step
forward reach set from a point $x\in \mathcal{X}$ as $\reachf(x)$, and the
unperturbed, one-step backward reach set from a set $ \mathcal{S} \subseteq \mathcal{X}$ as $\reach(
\mathcal{S})$. Formally, for the system \eqref{eq:nonlin-dist},
\begin{align}
    \reachf(x) &\triangleq \{ x^+\in \mathcal{X} : u \in \mathcal{U},\ x^+=f(x,u)\}\label{eq:reachF}\\
    \reachd(\mathcal{S})&\triangleq\left\{ x^- \in \mathcal{X} : \exists u\in
\mathcal{U}, \exists y\in \mathcal{S},\ y=f(x^-,u)\right\} \nonumber \\ 
&=\left\{ x^- \in \mathcal{X} : \reachf(x^-)\cap \mathcal{S}\neq\emptyset
\right\} \label{eq:dFcapS}
\end{align}
where \eqref{eq:dFcapS} follows from \eqref{eq:reachF}.  For the system \eqref{eq:lin}, 
\begin{align}
    \reachf(x)&=A\{x\}\oplus B \mathcal{U},\label{eq:FOlin} \\
    \reachd( \mathcal{S})&= A^{-1}(\mathcal{S}\oplus (-B
    \mathcal{U})).\label{eq:reachOlinCts}
\end{align}
\begin{prop}
    Given a set $\boundeddist\subseteq\mathcal{W}$, the finite horizon robust
    reach-avoid sets for the system \eqref{eq:nonlin-dist} can be computed
    recursively as follows for $k\geq 1,k\in \mathbb{N}$:\label{lem:dra-sp-recursion} 
        \begin{align}
        \rad_{0}(\mathcal{T},\mathcal{K},\boundeddist) &= \mathcal{T} \label{eq:dra_recurs_lem1}\\
        \rad_{k}(\mathcal{T},\mathcal{K},\boundeddist) &= \left\{x_{0} \in
            \mathcal{K} : \right. \nonumber\\
        &\quad \left. \reachf(x_{0}) \cap (\rad_{k-1}( \mathcal{T}, \mathcal{K}) \minkdiff \boundeddist) \neq
        \emptyset \right\}.\label{eq:dra_recurs_lem2}
    \end{align}
\end{prop}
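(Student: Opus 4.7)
The plan is to handle the $k=0$ case directly and then prove both inclusions in \eqref{eq:dra_recurs_lem2} by means of a single decomposition lemma describing how one step of a witness policy interacts with $\rad_{k-1}$. The base case is immediate from \eqref{eq:dra-set}: for $t=0$ the index set $\mathbb{Z}_{[0,-1]}$ is empty, so the safe-set constraint is vacuous and only $x_0 \in \mathcal{T}$ remains, yielding $\rad_0(\mathcal{T},\mathcal{K},\boundeddist)=\mathcal{T}$.

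The key step is a decomposition lemma: for $l\geq 1$ and $x \in \rad_l(\mathcal{T},\mathcal{K},\boundeddist)$, one has $x \in \mathcal{K}$ and there exists $u \in \mathcal{U}$ such that $f(x,u)+w \in \rad_{l-1}(\mathcal{T},\mathcal{K},\boundeddist)$ for every $w \in \boundeddist$. This is proved directly from \eqref{eq:dra-set}: take any witness policy $[\nu_0,\ldots,\nu_{l-1}]\in\mathcal{P}_l$ for $x$, set $u:=\nu_0(x)$, and observe that the tail $[\nu_1,\ldots,\nu_{l-1}]$ witnesses $f(x,u)+w \in \rad_{l-1}$ for each $w \in \boundeddist$, since any subsequent disturbance sequence extends $(w,\cdot,\ldots,\cdot)\in\boundeddist^l$. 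Applying the lemma with $l=k$ immediately delivers the forward inclusion $\rad_k(\mathcal{T},\mathcal{K},\boundeddist)\subseteq\{x_0 \in \mathcal{K}:\reachf(x_0)\cap(\rad_{k-1}(\mathcal{T},\mathcal{K},\boundeddist)\minkdiff\boundeddist)\neq\emptyset\}$, because the $u$ supplied by the lemma lies in $\rad_{k-1}\minkdiff\boundeddist$ and $f(x_0,u)\in\reachf(x_0)$.

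For the reverse inclusion, I would construct a certifying policy by patching. Given $u^*\in\mathcal{U}$ with $f(x_0,u^*)+\boundeddist\subseteq\rad_{k-1}$ and $x_0\in\mathcal{K}$, invoke the axiom of choice to fix, for each $l\geq 1$ and each $y\in\rad_l$, a single witness policy $\rho^{y,l}=[\nu_0^{y,l},\ldots,\nu_{l-1}^{y,l}]$. Then define feedback laws by $\nu_0(x_0):=u^*$ and, for $j\geq 1$ and $z\in\rad_{k-j}$, $\nu_j(z):=\nu_0^{z,k-j}(z)$ (with arbitrary values outside these sets). A secondary induction on $j$, at each step invoking the decomposition lemma with the locally chosen policy $\rho^{x_j,k-j}$, shows that for every disturbance sequence in $\boundeddist^k$ the closed-loop trajectory of \eqref{eq:nonlin-dist} satisfies $x_j \in \rad_{k-j}(\mathcal{T},\mathcal{K},\boundeddist)$ for all $j\in\{0,\ldots,k\}$. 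Since $\rad_0=\mathcal{T}$ this forces $x_k\in\mathcal{T}$, and the first clause of the lemma forces $x_j\in\mathcal{K}$ for each $j<k$, so $x_0\in\rad_k(\mathcal{T},\mathcal{K},\boundeddist)$.

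The main obstacle is the patching step: an admissible policy in $\mathcal{P}_k$ is a sequence of pure \emph{state-feedback} laws, yet the naturally arising witness for the tail depends on the state $x_1$ at which the tail begins, not only on the current state $x_j$. Because $\mathcal{F}$ imposes no measurability restriction on the feedback laws, a per-state choice is admissible and produces a policy in $\mathcal{P}_k$; this is precisely the delicacy that resurfaces in Problem~\ref{prob:meas}, where universal measurability must be preserved under the same construction.
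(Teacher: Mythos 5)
Your proof is correct and takes essentially the same route as the paper's: both peel off the first feedback law and the first disturbance to reduce $\rad_k(\mathcal{T},\mathcal{K},\boundeddist)$ to a one-step condition on $\rad_{k-1}(\mathcal{T},\mathcal{K},\boundeddist)$, with your decomposition lemma playing the role of the paper's chain of quantifier manipulations (the only slip is cosmetic: it is $f(x_0,u)$, not $u$ itself, that lies in $\rad_{k-1}(\mathcal{T},\mathcal{K},\boundeddist)\minkdiff\boundeddist$). Where the paper compresses the delicate direction into the single remark that $\exists\rho_{t-1}$ and $\forall w_0$ ``can be exchanged,'' you make the underlying policy-patching explicit by selecting one witness policy per state in each $\rad_l(\mathcal{T},\mathcal{K},\boundeddist)$ and assembling the tail feedback laws from their first components---a worthwhile elaboration, since that exchange is valid precisely because $\mathcal{F}$ imposes no measurability restriction on state-feedback laws, as you observe.
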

\begin{proof} 
    We first show the case $k=1$, which differs slightly
    from other cases. From
    \eqref{eq:nonlin-dist} and \eqref{eq:dra-set},
  \begin{align*}
    \rad_{1}(\mathcal{T},\mathcal{K},\boundeddist) &= \left\{x_{0} \in
      \mathcal{X} :  \exists \nu_0(\cdot)\in \mathcal{F}, x_{0} \in \mathcal{K},
      \forall w_{0} \in \boundeddist, \right. \\
   & \left. \hskip0.6cm \exists x^+\in \mathcal{T}, x^+=f(x_0,\nu_0(x_0))+w_0\right\} \\
    &= \left\{x_{0} \in
      \mathcal{X} : x_{0} \in \mathcal{K}, \exists u \in \mathcal{U}, \exists
      \nu_0(\cdot)\in \mathcal{F}, \right.  \nonumber \\
  &\quad \exists y\in \left.(\mathcal{T}\minkdiff \boundeddist) , y=f(x_0,\nu_0(x_0)),u=\nu_0(x_0)\right\} \\
             &= \left\{x_{0} \in \mathcal{K} : \exists y, y \in \reachf(x_{0})
  \wedge y\in (\mathcal{T} \minkdiff \boundeddist) \right\} \\
             &= \left\{x_{0} \in \mathcal{K} : \reachf(x_{0}) \cap (\rad_{0}(\mathcal{T},\mathcal{K},\boundeddist) \minkdiff \boundeddist) \neq \emptyset \right\}
  \end{align*}
  For any $t\in \mathbb{N}, t>1$, from \eqref{eq:dra-set},
  \begin{align}
      \rad_{t-1}(\mathcal{T}, \mathcal{K},\boundeddist) &= \left\{ x_{0} \in \mathcal{X} :
      \exists \rho_{t-1}\in \mathcal{P}_{t-1},\forall \bar{w}_{t-1}\in
      \boundeddist^{t-1},\right. \nonumber \\ 
  & \quad\left.  \forall k \in \mathbb{Z}_{[0,t-2]}, x_{k} \in \mathcal{K}, x_{t-1} \in
  \mathcal{T} \right\}.\label{eq:dra-set-t}
  \end{align}
  Using \eqref{eq:dra-set-t}, we construct $\rad_{t}(\mathcal{T},
  \mathcal{K}, \boundeddist)$ in the form of \eqref{eq:dra_recurs_lem2}.
  \begin{align}
      &\rad_{t}(\mathcal{T},\mathcal{K},\boundeddist) \nonumber \\
      &= \left\{ x_{0} \in \mathcal{X} :
      \exists \rho_{t}\in \mathcal{P}_{t},\forall \bar{w}_{t}\in
      \boundeddist^{t}, \forall k \in \mathbb{Z}_{[0,t-1]}, \right.\nonumber \\ 
     &\quad\quad\quad\quad\left. x_{k} \in
  \mathcal{K}, x_{t} \in \mathcal{T} \right\} \nonumber \\
            &= \left\{ x_{0} \in \mathcal{X} : x_0\in \mathcal{K}, \exists
  \nu_0(\cdot)\in \mathcal{F}, \exists \rho_{t-1}\in \mathcal{P}_{t-1},
  \right. \nonumber \\
  &\quad\quad\quad\quad\rho_t=[\nu_0, \rho_{t-1}], \forall
  w_0\in \boundeddist,\forall \bar{w}_{t-1}\in \boundeddist^{t-1}, \nonumber \\
  &\quad\quad\quad\quad\left.  \forall k \in \mathbb{Z}_{[1,t-1]},x_{k}
        \in \mathcal{K}, x_{t} \in \mathcal{T} \right\} \label{eq:prevP1}\\
            &= \left\{ x_{0} \in \mathcal{K} : \exists \nu_{0}(\cdot)\in \mathcal{F},\
  \forall w_0\in \boundeddist, \exists x_0^+\in \rad_{t-1}( \mathcal{T},
  \mathcal{K}, \boundeddist)\right. \nonumber \\
            &\left.\quad\quad\quad\quad f(x_0,\nu_0(x_0))+w_0 = x_0^+  \right\}
  \label{eq:nextP1}\\
            &= \left\{ x_{0} \in \mathcal{K} : \exists \nu_{0}(\cdot)\in
  \mathcal{F},\exists y\in \rad_{t-1}( \mathcal{T}, \mathcal{K})\minkdiff
    \boundeddist, \right. \nonumber \\
&\quad\quad\quad\quad\left. y= f(x_0,\nu_0(x_0)) \right\} \nonumber \\
            &= \left\{x_{0} \in \mathcal{K} : \reachf(x_{0}) \cap
(\rad_{t-1}(\mathcal{T},\mathcal{K},\boundeddist) \minkdiff \boundeddist) \neq
\emptyset \right\}. \nonumber
  \end{align}
Since the choice of $w_0$ depends only $(x_0,\nu(x_0))$, the terms $\exists
\rho_{t-1}$ and $\forall w_0$ can be exchanged in \eqref{eq:prevP1}. We obtain
\eqref{eq:nextP1} after exchanging the terms and applying \eqref{eq:dra-set-t}.
\end{proof}

\begin{figure*}
  \begin{center}
    \includegraphics{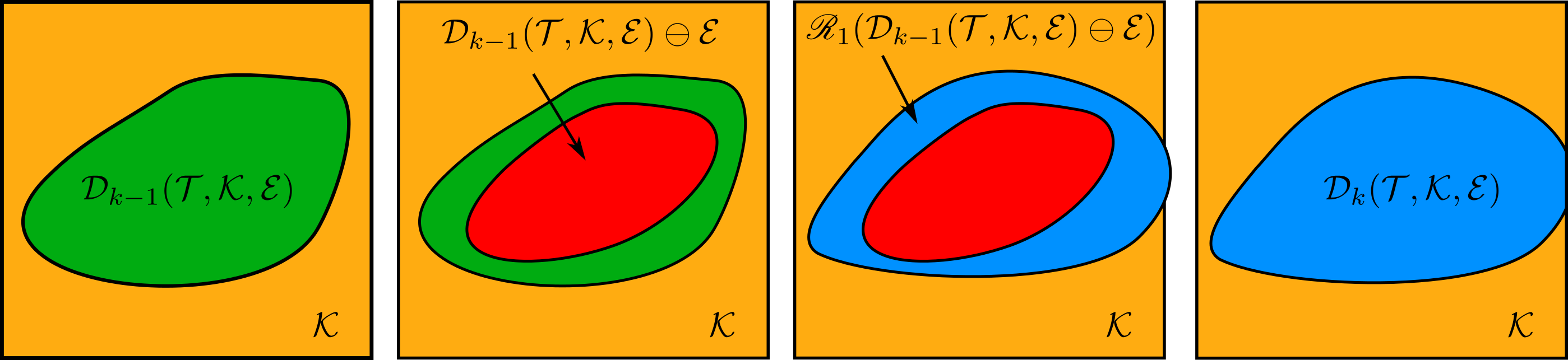}
  \end{center}
  \caption{Graphical representation of Lagrangian methods for computing 
  $\rad_{k}(\mathcal{T},\mathcal{K},\boundeddist)$ from
  $\rad_{k-1}(\mathcal{T},\mathcal{K},\boundeddist)$ via \eqref{eq:dra_recursen}.}
  \label{fig:ra-bound-dist}
\end{figure*}
\begin{theorem}\label{thm:DRA_recurs}
  For the system given in (\ref{eq:nonlin-dist}), the finite-time robust
  reach-avoid sets $\rad_{k}$ can be computed using the recursion  for $k\geq 1,k\in \mathbb{N}$:
  \begin{align}
      \rad_{0}(\mathcal{T},\mathcal{K},\boundeddist) &= \mathcal{T}
      \label{eq:dra_recurse0}\\
      \rad_{k}(\mathcal{T},\mathcal{K},\boundeddist) &= \mathcal{K} \cap
        \reachd(\rad_{k-1}( \mathcal{T}, \mathcal{K},\boundeddist) \minkdiff \boundeddist)\label{eq:dra_recursen}
  \end{align}
\end{theorem}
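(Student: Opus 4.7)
The plan is to obtain Theorem~\ref{thm:DRA_recurs} as an almost immediate corollary of Proposition~\ref{lem:dra-sp-recursion}, by recognizing the backward-reach operator $\reachd$ inside the characterization already established there. All of the real work --- unrolling the definition of $\rad_k$, pulling out the first step, and separating the choice of control input from the quantifier over disturbances --- has been done in the proof of Proposition~\ref{lem:dra-sp-recursion}, so what remains is essentially a notational rewrite.

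First I would handle the base case $\rad_0(\mathcal{T},\mathcal{K},\boundeddist) = \mathcal{T}$ directly from \eqref{eq:dra-set}: with horizon $k=0$, the only trajectory point constrained is $x_0$, and the conditions $\forall k\in\mathbb{Z}_{[0,-1]},\ x_k\in\mathcal{K}$ are vacuous, while $x_0 \in \mathcal{T}$ is the only remaining requirement. This gives \eqref{eq:dra_recurse0} immediately.

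For the inductive step, I would start from the expression
\begin{align*}
    \rad_{k}(\mathcal{T},\mathcal{K},\boundeddist) &= \bigl\{x_{0} \in \mathcal{K} : \reachf(x_{0}) \\
    &\qquad \cap (\rad_{k-1}(\mathcal{T}, \mathcal{K},\boundeddist) \minkdiff \boundeddist) \neq \emptyset\bigr\}
\end{align*}
given by Proposition~\ref{lem:dra-sp-recursion}, and set $\mathcal{S} \triangleq \rad_{k-1}(\mathcal{T},\mathcal{K},\boundeddist) \minkdiff \boundeddist$. By the definition \eqref{eq:dFcapS} of the unperturbed one-step backward reach set,
\[
    \reachd(\mathcal{S}) = \{x^- \in \mathcal{X} : \reachf(x^-) \cap \mathcal{S} \neq \emptyset\},
\]
so the condition $\reachf(x_0) \cap \mathcal{S} \neq \emptyset$ is equivalent to $x_0 \in \reachd(\mathcal{S})$. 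Combining this with the safe-set membership $x_0 \in \mathcal{K}$ yields $x_0 \in \mathcal{K} \cap \reachd(\mathcal{S})$, which is exactly \eqref{eq:dra_recursen}.

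There is no genuine obstacle here; the only point that deserves a sentence of justification is that the Minkowski difference $\rad_{k-1}(\mathcal{T},\mathcal{K},\boundeddist)\minkdiff\boundeddist$ is indeed the correct ``robustified target'' for the one-step backward reach, i.e.\ that $y \in \rad_{k-1}\minkdiff\boundeddist$ iff $y + w \in \rad_{k-1}$ for every $w\in\boundeddist$. This is simply the definition of $\minkdiff$ recalled in Section~\ref{sec:problem-statement}, and it is precisely the property that was used in passing between \eqref{eq:prevP1} and \eqref{eq:nextP1} in the proof of Proposition~\ref{lem:dra-sp-recursion}. Figure~\ref{fig:ra-bound-dist} can then be referenced to illustrate the resulting ``erode then backward-reach then intersect with $\mathcal{K}$'' interpretation of the recursion.
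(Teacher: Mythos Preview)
Your proposal is correct and follows exactly the paper's approach: the paper's proof of Theorem~\ref{thm:DRA_recurs} is the single sentence ``Follows from Proposition~\ref{lem:dra-sp-recursion} and \eqref{eq:dFcapS},'' and you have spelled out precisely that derivation. Your additional remarks on the base case and on the role of the Minkowski difference are accurate elaborations but not required beyond what Proposition~\ref{lem:dra-sp-recursion} already establishes.
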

\begin{proof}
    Follows from Proposition~\ref{lem:dra-sp-recursion} and \eqref{eq:dFcapS}.
\end{proof}
Figure \ref{fig:ra-bound-dist} depicts the recursion in Theorem~\ref{thm:DRA_recurs} \eqref{eq:dra_recursen} graphically. From
\eqref{eq:dra_recursen}, we have the following corollary.
\begin{corollary}
    $\rad_k(\mathcal{T},\mathcal{K},\boundeddist) \subseteq K\ \forall k\geq 1,
    k\in \mathbb{N}$.\label{corr:Prop2}
\end{corollary}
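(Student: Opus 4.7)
The plan is immediate and essentially a one-line observation: invoke the recursion in Theorem~\ref{thm:DRA_recurs}, and use the elementary set-theoretic fact that an intersection is contained in each of its operands.

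Concretely, fix any $k\geq 1$. By equation~\eqref{eq:dra_recursen} of Theorem~\ref{thm:DRA_recurs},
\begin{equation*}
    \rad_{k}(\mathcal{T},\mathcal{K},\boundeddist) = \mathcal{K} \cap \reachd\!\left(\rad_{k-1}(\mathcal{T},\mathcal{K},\boundeddist) \minkdiff \boundeddist\right).
\end{equation*}
The right-hand side is of the form $\mathcal{K} \cap \mathcal{S}$ for the set $\mathcal{S} = \reachd(\rad_{k-1}(\mathcal{T},\mathcal{K},\boundeddist) \minkdiff \boundeddist)$. Since for any two sets $\mathcal{A},\mathcal{B}$ one has $\mathcal{A}\cap\mathcal{B}\subseteq\mathcal{A}$, it follows at once that $\rad_{k}(\mathcal{T},\mathcal{K},\boundeddist)\subseteq\mathcal{K}$. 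Because $k\geq 1$ was arbitrary, the claim holds for all $k\in\mathbb{N}$ with $k\geq 1$.

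There is no real obstacle here; the only subtlety worth flagging is that the base case $k=0$ is (correctly) excluded from the statement, since $\rad_{0}(\mathcal{T},\mathcal{K},\boundeddist)=\mathcal{T}$ by~\eqref{eq:dra_recurse0}, and in general $\mathcal{T}\not\subseteq\mathcal{K}$. The proof therefore only needs the inductive step of the recursion, not the initialization.
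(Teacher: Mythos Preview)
Your proof is correct and matches the paper's own justification exactly: the paper simply states that the corollary follows from~\eqref{eq:dra_recursen}, and you have spelled out precisely that observation. There is nothing to add.
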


For completeness, we establish that the viability analysis presented
in~\cite{maidens_2013} is a special case of Theorem~\ref{thm:DRA_recurs}.
From~\cite[Theorem 2.1]{kolmanovsky1998}, for any $ \mathcal{S}_1,
\mathcal{S}_2\subseteq \mathcal{X}$, $\mathcal{S}_2\minkdiff
\mathcal{S}_1=\cap_{s\in \mathcal{S}_1} ( \mathcal{S}_2 \oplus \{-s\})$.
Hence, 
\begin{align}
    \mathcal{S}_2\minkdiff \{0_n\}&=\mathcal{S}_2.\label{eq:S2diff0}
\end{align}
\begin{corollary}{\cite[Theorem 1]{maidens_2013}}
  The finite horizon viable sets for \eqref{eq:nonlin} can be computed
  recursively as follows:\label{corr:maidens}
  \begin{equation}
    \begin{split}
      \viab_{0}(\mathcal{K}) &= \mathcal{K} \\
      \viab_{k}(\mathcal{K}) &= \mathcal{K} \cap \reach(\viab_{k-1}(\mathcal{K}))
    \end{split}
  \end{equation}
\end{corollary}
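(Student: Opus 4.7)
The plan is to derive this as an immediate specialization of Theorem~\ref{thm:DRA_recurs}, using the identification of the viable set with a particular robust reach-avoid set given in \eqref{eq:viab-set-ra}. Concretely, I would set $\mathcal{T}=\mathcal{K}$ and $\boundeddist=\{0_n\}$ in the recursion \eqref{eq:dra_recurse0}--\eqref{eq:dra_recursen}, which, by \eqref{eq:viab-set-ra}, produces $\viab_k(\mathcal{K})$ on the left-hand side for every $k\in\mathbb{N}$.

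The base case is immediate: \eqref{eq:dra_recurse0} evaluated at $\mathcal{T}=\mathcal{K}$ gives $\rad_0(\mathcal{K},\mathcal{K},\{0_n\})=\mathcal{K}$, and by \eqref{eq:viab-set-ra} this is exactly $\viab_0(\mathcal{K})=\mathcal{K}$. For $k\geq 1$, I would apply \eqref{eq:dra_recursen} with the same substitution to obtain
\begin{equation*}
    \rad_{k}(\mathcal{K},\mathcal{K},\{0_n\}) = \mathcal{K}\cap\reachd\bigl(\rad_{k-1}(\mathcal{K},\mathcal{K},\{0_n\})\minkdiff\{0_n\}\bigr),
\end{equation*}
and then invoke the identity \eqref{eq:S2diff0}, i.e.\ $\mathcal{S}\minkdiff\{0_n\}=\mathcal{S}$, to eliminate the Minkowski difference. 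Translating back through \eqref{eq:viab-set-ra} yields $\viab_k(\mathcal{K})=\mathcal{K}\cap\reach(\viab_{k-1}(\mathcal{K}))$, which is the claimed recursion.

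There is essentially no technical obstacle here; the corollary is a direct specialization once one observes that a deterministic dynamical system \eqref{eq:nonlin} corresponds to the disturbance set $\boundeddist=\{0_n\}$ in \eqref{eq:nonlin-dist}, and that enforcing $\mathcal{T}=\mathcal{K}$ collapses the reach-avoid objective to pure viability in $\mathcal{K}$ over the horizon (including the terminal instant, matching the index range $\mathbb{Z}_{[0,t]}$ in \eqref{eq:viab-set}). The only subtlety worth flagging in the write-up is that the endpoint constraint $x_t\in\mathcal{T}$ in \eqref{eq:dra-set} becomes $x_t\in\mathcal{K}$ under the substitution, so the union of the "interior" constraints $x_k\in\mathcal{K}$ for $k\in\mathbb{Z}_{[0,t-1]}$ with the terminal constraint is precisely the viability constraint over $\mathbb{Z}_{[0,t]}$.
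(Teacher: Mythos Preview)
Your proposal is correct and follows exactly the paper's own argument: the paper's proof simply states that the result follows from Theorem~\ref{thm:DRA_recurs}, \eqref{eq:nonlin}, and \eqref{eq:S2diff0}, which is precisely your specialization $\mathcal{T}=\mathcal{K}$, $\boundeddist=\{0_n\}$ together with the identification \eqref{eq:viab-set-ra}. You have merely spelled out the details that the paper leaves implicit.
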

\begin{proof}
    Follows from Theorem~\ref{thm:DRA_recurs}, \eqref{eq:nonlin}, and
    \eqref{eq:S2diff0}.
\end{proof}
A similar recursion can be provided for computing the reach-avoid sets for a
deterministic system.  

\begin{lemma}{~\cite[Proposition 3]{bertsekas1971minimax}}
    For the dynamics \eqref{eq:lin}, if $ \mathcal{U}, \mathcal{K}, \mathcal{T}$
    are convex and compact sets, $\boundeddist$ is a compact set, and $A$ in the
    dynamics \eqref{eq:lin} is non-singular, then $\rad_{k}( \mathcal{T},
    \mathcal{K}, \boundeddist)$ is convex and compact $\forall k \in
    \mathbb{N}$.\label{lem:dra-cvx}
\end{lemma}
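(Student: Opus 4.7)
The plan is to prove the statement by induction on $k$ using the recursion in Theorem~\ref{thm:DRA_recurs}, exploiting the explicit formula \eqref{eq:reachOlinCts} for $\reachd(\cdot)$ in the linear case.

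For the base case $k=0$, we have $\rad_0(\mathcal{T},\mathcal{K},\boundeddist) = \mathcal{T}$, which is convex and compact by hypothesis. For the inductive step, suppose $\rad_{k-1}(\mathcal{T},\mathcal{K},\boundeddist)$ is convex and compact. I would first argue that the Minkowski (Pontryagin) difference $\rad_{k-1}(\mathcal{T},\mathcal{K},\boundeddist) \minkdiff \boundeddist$ preserves both convexity and compactness. Using the identity $\mathcal{S}_2 \minkdiff \mathcal{S}_1 = \bigcap_{s\in\mathcal{S}_1}(\mathcal{S}_2 \oplus \{-s\})$ cited from~\cite{kolmanovsky1998}, each translate $\rad_{k-1} \oplus \{-s\}$ is convex and compact, so the arbitrary intersection is convex and closed; boundedness follows because the intersection is contained in any one of the translates. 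Hence $\rad_{k-1}(\mathcal{T},\mathcal{K},\boundeddist) \minkdiff \boundeddist$ is convex and compact (possibly empty, in which case the conclusion holds trivially).

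Next, I would apply \eqref{eq:reachOlinCts} to write
\begin{equation*}
\reachd\bigl(\rad_{k-1}(\mathcal{T},\mathcal{K},\boundeddist) \minkdiff \boundeddist\bigr) = A^{-1}\bigl((\rad_{k-1}(\mathcal{T},\mathcal{K},\boundeddist) \minkdiff \boundeddist) \oplus (-B\mathcal{U})\bigr).
\end{equation*}
Since $\mathcal{U}$ is convex and compact and $B$ is a linear map, $-B\mathcal{U}$ is convex and compact. The Minkowski sum of two convex and compact sets is convex and compact, so the argument of $A^{-1}$ inherits both properties. Because $A$ is nonsingular, $A^{-1}$ is a well-defined linear (hence continuous) bijection, and therefore the image under $A^{-1}$ is again convex and compact. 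Finally, $\rad_{k}(\mathcal{T},\mathcal{K},\boundeddist) = \mathcal{K} \cap \reachd(\rad_{k-1}(\mathcal{T},\mathcal{K},\boundeddist) \minkdiff \boundeddist)$ is the intersection of two convex and compact sets, and thus convex and compact, completing the induction.

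The main obstacle is verifying that the Minkowski difference step preserves compactness cleanly, since the Pontryagin difference can shrink or empty the set; I handle this by invoking the intersection representation from~\cite{kolmanovsky1998} and the convention that the empty set is vacuously convex and compact. The other steps are routine set-theoretic and linear-algebraic manipulations that follow directly from the compactness/convexity of $\mathcal{U}$ and the nonsingularity of $A$.
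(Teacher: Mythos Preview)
Your proof is correct and follows the natural inductive argument via the recursion of Theorem~\ref{thm:DRA_recurs} and the explicit linear formula \eqref{eq:reachOlinCts}. Note, however, that the paper does not supply its own proof of this lemma: it is stated as a citation to~\cite[Proposition~3]{bertsekas1971minimax}, so there is no in-paper argument to compare against. Your approach is essentially the standard one and is what the cited reference does as well.
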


Note that convexity of $\rad_{k}( \mathcal{T}, \mathcal{K}, \boundeddist)$ does
not require convexity of $\boundeddist$. Further, for polyhedral $ \mathcal{U},
\mathcal{K}, \mathcal{T}$, the robust reach-avoid set $\rad_k( \mathcal{T},
\mathcal{K}, \boundeddist)$ is polyhedral for $k\in \mathbb{N}$. Note that the
same can not be said be for ellipsoids~\cite[Sec.  4]{bertsekas1971minimax}.  A
detailed discussion for the implementation of Theorem~\ref{thm:DRA_recurs} for
polyhedral sets using support functions is given in~\cite[App.  A]{bertsekas1971minimax}.

\subsection{Minmax problem for robust reach-avoid set computation}

We will now address Problem~\ref{prob:meas}.  A minmax optimization
problem was presented in~\cite[Sec.  1]{bertsekas1971minimax},~\cite[Sec.
4.6.2]{bertsekasDP} to compute the robust reach-avoid sets \eqref{eq:dra-set} for
the system \eqref{eq:nonlin-dist}. The optimization problem is:
\begin{align}
  \begin{array}{rl}
      \underset{\rho_{t}}{\mbox{minimize}}&\hskip-0.3cm\underset{\bar{w}_{t}}{\mbox{
      maximize}}\  J(\rho_{t-1},\bar{w}_{t-1};x_0,t)=\sum_{k=0}^{t} g_k(x_k) \\
      \mbox{subject to}& \hskip-0.2cm\left\{\begin{array}{rll}
      x_{k+1} &= f(x_k,\nu_k(x_k))+w_k & k\in  \mathbb{Z}_{[0,t-1]}\\
      w_k & \in \boundeddist  & k\in  \mathbb{Z}_{[0,t-1]}\\
      \nu_k(\cdot) & \in \mathcal{F}  & k\in  \mathbb{Z}_{[0,t-1]}\\
    \end{array}\right.
  \end{array}
  \label{eq:prob-minmax}
\end{align}
where the decision variables are $ \rho_t$ and $\bar{w}_t$. Here,
$g_k(\cdot)=1-\ONE_{\mathcal{K}}(\cdot)$ for $k\in \mathbb{Z}_{[0,t-1]}$ and
$g_t(\cdot)=1-\ONE_{\mathcal{T}}(\cdot)$. The objective
function $J(\cdot)$ is parameterized by the initial state $x_0\in \mathcal{X}$
and the time horizon $t\in \mathbb{N}$. Problem \eqref{eq:prob-minmax} can be
solved using dynamic programming~\cite[Sec 1.6]{bertsekasDP} to generate the value
functions $J^\ast_k(x;t): \mathcal{X} \rightarrow \mathbb{Z}_{[0,t-k+1]}$ for
$k\in \mathbb{Z}_{[0,t]}$
\begin{align}
    H^\ast_{k}(u,x;t)&=\sup_{w\in \boundeddist}
    \left[J^\ast_{k+1}(f(x,u)+w;t)+g_k(x)\right]\label{eq:minmaxVH}\\
    J^\ast_k(x;t)&=\inf_{u\in \mathcal{U}}H^\ast_{k}(u,x;t) \label{eq:minmaxV}
\end{align}
initialized with $J^\ast_t(x;t)=g_t(x)$. The optimal value of problem \eqref{eq:prob-minmax} when starting at $x_0$ is $J^\ast_0(x_0;t)$. 
Further,
\begin{align}
    \rad_t( \mathcal{T}, \mathcal{K}, \boundeddist)&=\{x\in \mathcal{X}:
J^\ast_t(x_0;t)=0\}.\label{eq:link}
\end{align}

Recall that lower semi-continuous functions are functions whose sublevel-sets
are closed and upper-semicontinuous functions are functions whose negative is a lower semi-continuous function~\cite[Definition 7.13]{BertsekasSOC1978}. Also, the supremum of a 
lower-semicontinuous function is the negative of the infimum of an upper-semicontinuous function.
Let an optimal control policy for problem \eqref{eq:prob-minmax} be $\rho_t^\ast=[\nu_0^\ast(\cdot)\
\ldots\ \nu_{t-1}^\ast(\cdot)]$. Note that $\rho_t^\ast$ need not be unique. 

\begin{theorem}
   For closed sets $ \mathcal{K}, \mathcal{T}$ and compact set $ \mathcal{U}$, there exists an optimal policy $\rho^\ast_t$ for problem \eqref{eq:prob-minmax} which is also a Markov policy. \label{thm:LSCB}
\end{theorem}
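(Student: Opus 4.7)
The plan is to establish the result by backward induction on the dynamic programming recursion \eqref{eq:minmaxVH}--\eqref{eq:minmaxV}, showing at each stage $k\in\mathbb{Z}_{[0,t]}$ that $J^\ast_k(\cdot;t)$ is lower semi-continuous (LSC) in $x$ and that the infimum in \eqref{eq:minmaxV} is attained by a universally measurable state-feedback law $\nu^\ast_k\in \mathcal{F}_u$. The remark preceding the theorem, which recalls the sublevel-set characterization of LSC functions together with the duality between sup of LSC and inf of upper semi-continuous functions, deliberately foreshadows this route.

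For the base case, $J^\ast_t(x;t)=g_t(x)=1-\ONE_{\mathcal{T}}(x)$, and since $\mathcal{T}$ is closed, $\ONE_{\mathcal{T}}$ is upper semi-continuous, so $g_t$ is LSC. For the inductive step I would assume $J^\ast_{k+1}(\cdot;t)$ is LSC and first argue that $(u,x)\mapsto J^\ast_{k+1}(f(x,u)+w;t)$ is LSC jointly for each fixed $w$ by composing the LSC value function with the continuous dynamics $f$ and a continuous translation by $w$. Because the pointwise supremum of any family of LSC functions is LSC, $\sup_{w\in \boundeddist}J^\ast_{k+1}(f(x,u)+w;t)$ is LSC in $(u,x)$; adding $g_k(x)=1-\ONE_{\mathcal{K}}(x)$, which is LSC since $\mathcal{K}$ is closed, shows that $H^\ast_k(u,x;t)$ is LSC in $(u,x)$. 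Compactness of $\mathcal{U}$ together with LSC of $H^\ast_k(\cdot,x;t)$ then guarantees that the infimum in \eqref{eq:minmaxV} is attained for every $x$, and Berge's maximum theorem yields LSC of $J^\ast_k(x;t)$ in $x$, closing the induction. A measurable selection theorem for LSC objectives on compact-valued correspondences (cf.~\cite{BertsekasSOC1978}) then furnishes a universally measurable minimizer $\nu^\ast_k\in \mathcal{F}_u$, and concatenation gives $\rho^\ast_t=[\nu^\ast_0(\cdot),\ldots,\nu^\ast_{t-1}(\cdot)]\in\mathcal{M}\subseteq \mathcal{P}_t$, which is the desired Markov policy. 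That $\rho^\ast_t$ is optimal follows from \eqref{eq:minmaxVH}--\eqref{eq:minmaxV} applied pointwise.

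The main obstacle lies in propagating LSC through the inner supremum in \eqref{eq:minmaxVH}: joint LSC of the composition $J^\ast_{k+1}\circ(f(\cdot,\cdot)+w)$ requires continuity of $f$ in $(x,u)$, which is stronger than the Borel-measurability assumed in Section~\ref{sec:problem-statement} and should therefore be added as a hypothesis of the theorem or inferred from the Lagrangian setting underlying the rest of the paper. A secondary subtlety is that $\boundeddist$ need not be compact, but this is harmless since suprema of arbitrary LSC families remain LSC. Finally, the selection step must deliver a universally measurable rather than merely Borel selector so that $\nu^\ast_k\in \mathcal{F}_u$; this is exactly why the enlarged class $\mathcal{F}_u$ is introduced in Section~\ref{sub:SRA} and matches the standard Bertsekas--Shreve framework for discrete-time stochastic control.
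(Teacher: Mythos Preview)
Your proposal is correct and follows essentially the same route as the paper: backward induction on the dynamic-programming recursion, propagation of lower semicontinuity through the inner $\sup$ and outer $\inf$, and a measurable selection argument over the compact $\mathcal{U}$. Two small remarks: the paper invokes \cite[Prop.~7.33]{BertsekasSOC1978} to obtain \emph{Borel}-measurable selectors (hence universally measurable, as you require), rather than appealing to Berge plus a separate selection theorem; and your observation that joint LSC of $(u,x)\mapsto J^\ast_{k+1}(f(x,u)+w;t)$ requires continuity of $f$, not merely Borel measurability, correctly identifies an implicit hypothesis that the paper's proof also needs but does not state.
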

\begin{proof}
    We show by induction that the statement \emph{S}: the optimal value
    functions $J_k^\ast$ of \eqref{eq:prob-minmax} are lower-semicontinuous and
    there exists a Borel-measurable state-feedback law $\nu_k^\ast(\cdot)$ for every $k\in \mathbb{Z}_{[0,N-1]}$.  Since Borel-measurability
    implies universal measurability~\cite[Definition 7.20]{BertsekasSOC1978},
    the proof of Theorem~\ref{thm:LSCB} follows from \emph{S} and the
    definition of a Markov policy. 
    
    \emph{Proof of S}: The closedness property of $ \mathcal{K}, \mathcal{T}$ imply
    $g_k(\cdot)$ is lower semi-continuous for $k\in \mathbb{Z}_{[0,t]}$. Hence,
    $J_t^\ast(x;t)$ is lower-semicontinuous. 
    
    Consider the base case $k=t-1$. From~\cite[Prop. 7.32 (b)]{BertsekasSOC1978}, we can
    see that $H_{t-1}^\ast(u,x;t)$ is lower semi-continuous. From~\cite[Prop.
    7.33]{BertsekasSOC1978}, we conclude that $J_{t-1}^\ast(x;t)$ is lower
    semi-continuous and an optimal state-feedback policy $\nu_{t-1}^\ast(\cdot)$ exists which is also Borel-measurable.

    Let $\tau\in \mathbb{Z}_{[1,t-2]}$. Assume, for induction, the case $k=\tau$ is true,
    i.e, $J_\tau^\ast$ is lower semi-continuous. The proof that $J_{\tau-1}^\ast$ is
    lower semi-continuous and the existence of a Borel-measurable $\nu_{\tau-1}^\ast(\cdot)$ follows
    from~\cite[Prop.  7.32(b) and 7.33]{BertsekasSOC1978}. This completes the induction.
\end{proof}

\section{Conservative Approximation of Stochastic Reach-Avoid Level-Set}
\label{sec:stochastic-approx}

We will now focus on the stochastic system described in Section~\ref{sub:SRA}
and use the theory developed in Section~\ref{sec:lagrangian} to solve
Problems~\ref{prob:SRA} and~\ref{prob:Gaussian}.

\begin{theorem}
   Given closed sets $ \mathcal{K}, \mathcal{T}$, compact set $ \mathcal{U}$ and
   a set $ \boundeddist \subseteq \mathcal{W}$. For every $x\in \rad_{t}(
   \mathcal{T}, \mathcal{K}, \boundeddist)$ with $t\in \mathbb{Z}_{[1,N]}$,
   \begin{align}
      \mathbb{P}^{t,\rho_N^\ast}_{\bar{x}_{t}}(x_{N} \in \mathcal{T},
      &x_{N-1} \in \mathcal{K}, \dots, \nonumber \\
      &x_{N-t+1} \in \mathcal{K} | x, \bar{w}_{t}\in \boundeddist^{t}) = 1.\label{eq:Prop1}
  \end{align}\label{thm:Prop1}
\end{theorem}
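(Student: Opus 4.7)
The plan is to chain together two facts from the preceding material: the identity~\eqref{eq:link} that characterizes $\rad_t(\mathcal{T},\mathcal{K},\boundeddist)$ as the zero sublevel set of the optimal value function of the minmax problem~\eqref{eq:prob-minmax}, and the existence of a universally measurable (indeed Borel-measurable) optimal Markov policy $\rho_t^\ast$ for that same minmax problem guaranteed by Theorem~\ref{thm:LSCB}. First I would fix $x \in \rad_t(\mathcal{T},\mathcal{K},\boundeddist)$ and invoke~\eqref{eq:link} to obtain $J_0^\ast(x;t) = 0$. Combining this with the fact that the instantaneous costs $g_k(\cdot) = 1 - \ONE_{\mathcal{K}}(\cdot)$ for $k \in \mathbb{Z}_{[0,t-1]}$ and $g_t(\cdot) = 1 - \ONE_{\mathcal{T}}(\cdot)$ are nonnegative $\{0,1\}$-valued, the optimal minmax value being zero is equivalent to the statement that under $\rho_t^\ast$ and for every disturbance realization $\bar{w}_t \in \boundeddist^t$, the trajectory of~\eqref{eq:nonlin-dist} starting from $x$ satisfies $x_k \in \mathcal{K}$ for $k \in \mathbb{Z}_{[0,t-1]}$ and $x_t \in \mathcal{T}$. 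This is a \emph{deterministic} (pointwise in $\bar{w}_t$) reach-avoid guarantee.

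Next I would lift this deterministic guarantee to the stochastic setting. Because the dynamics and the sets $\mathcal{K}, \mathcal{T}, \boundeddist$ are time-invariant, the policy $\rho_t^\ast$ can be embedded as the last $t$ state-feedback laws of a Markov policy $\rho_N^\ast \in \mathcal{M}$ for the $N$-horizon problem (with the first $N-t$ feedback laws chosen arbitrarily from $\mathcal{F}_u$, and re-indexing so that $\mu^\ast_{N-t+j}(\cdot) = \nu^\ast_{j-1}(\cdot)$ for $j \in \mathbb{Z}_{[1,t]}$). Theorem~\ref{thm:LSCB} ensures that the relevant feedback laws are Borel-measurable and hence universally measurable, so $\rho_N^\ast \in \mathcal{M}$ is well-defined and $\mathbb{P}^{t,\rho_N^\ast}_{\bar{x}_t}$ is well-defined. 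Conditioning this induced measure on the initial state $x$ at time $N-t$ and on $\{\bar{w}_t \in \boundeddist^t\}$, the state trajectory is fully determined by the state-feedback laws $\nu^\ast_0, \ldots, \nu^\ast_{t-1}$ and the disturbance realization, whence the deterministic guarantee above implies that the reach-avoid event holds on the entirety of the conditioning event, giving the conditional probability $1$.

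The main obstacle I anticipate is the measurability bookkeeping when conditioning on the event $\{\bar{w}_t \in \boundeddist^t\}$: the equivalence between ``$J_0^\ast(x;t) = 0$'' and ``the reach-avoid objective holds for every $\bar{w}_t \in \boundeddist^t$'' must be phrased carefully so that it transfers to the induced probability space $(\mathcal{X}^t, \sigma(\mathcal{X}^t), \mathbb{P}^{t,\rho_N^\ast}_{\bar{x}_t})$. Specifically, one needs that the mapping sending $\bar{w}_t$ to the concatenated state $\bar{x}_t$ under $\rho_N^\ast$ is measurable (which follows from Borel-measurability of $f$ and of the $\nu^\ast_k$'s) and that $\boundeddist$ is at least Borel so that conditioning makes sense. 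Once these measurability details are handled, the conclusion~\eqref{eq:Prop1} is immediate from the pointwise-in-$\bar{w}_t$ guarantee extracted from Theorem~\ref{thm:LSCB} and~\eqref{eq:link}.
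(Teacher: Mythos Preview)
Your proposal is correct and is essentially an expanded version of the paper's one-line proof, which simply cites Theorem~\ref{thm:LSCB} and the definition of $\rad_t$ in~\eqref{eq:dra-set}. Your route through the minmax characterization~\eqref{eq:link} is the natural way to tie the optimal Markov policy of Theorem~\ref{thm:LSCB} to membership in $\rad_t$, and the time-shift and measurability bookkeeping you flag are precisely the details the paper suppresses.
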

\begin{proof}
    Follows from Theorem~\ref{thm:LSCB} and the definition of $\rad_{t}( \mathcal{T},
    \mathcal{K}, \boundeddist)$ \eqref{eq:dra-set}.
\end{proof}
\begin{theorem} \label{lem:bounded-conservative}
    Given $\beta\in[0,1]$, closed sets $ \mathcal{K}, \mathcal{T}$, and a compact set
    $ \mathcal{U}$, if for any $t\in\mathbb{Z}_{[0,N]}$, $\boundeddist
    \subseteq \mathcal{W}$ such that $\mathbb{P}_{w}(w_k \in \boundeddist) =
    \beta^{\frac{1}{t}}$ for all $k\in\mathbb{Z}_{[0,t-1]}$, then $\rad_{t}(\mathcal{T},\mathcal{K},\boundeddist) \subseteq
    \mathcal{L}_{t}(\beta)$  .\label{thm:underapprox}
\end{theorem}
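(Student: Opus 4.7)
The plan is to take an arbitrary $x_0\in \rad_t(\mathcal{T},\mathcal{K},\boundeddist)$ and exhibit a single Markov policy whose reach-avoid probability from $x_0$ over $t$ steps is at least $\beta$, which by definition~\eqref{eq:sra-set} places $x_0$ in $\mathcal{L}_t(\beta)$. The key observation is that the hypothesis $\mathbb{P}_{w}(w_k\in \boundeddist)=\beta^{1/t}$ combined with the i.i.d. property of the disturbance process yields $\mathbb{P}(\bar{w}_t\in \boundeddist^{t})=(\beta^{1/t})^t=\beta$, and on this event the policy that is robust to $\boundeddist$ achieves the reach-avoid objective deterministically, by Theorem~\ref{thm:Prop1}.

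First, I would invoke Theorem~\ref{thm:LSCB} to produce an optimal policy $\rho_t^{\ast}=[\nu_0^{\ast}(\cdot),\ldots,\nu_{t-1}^{\ast}(\cdot)]$ of the minmax problem~\eqref{eq:prob-minmax} whose constituent feedback laws are Borel-measurable, hence universally measurable. This $\rho_t^{\ast}$ is therefore admissible as (the first $t$ stages of) a Markov policy in $\mathcal{M}$, which can be extended arbitrarily over the remaining stages if needed. By Theorem~\ref{thm:Prop1}, the induced trajectory starting at $x_0$ satisfies the reach-avoid event with conditional probability one given $\{\bar w_t\in \boundeddist^t\}$. Applying the tower property together with the i.i.d. structure of $\{w_k\}$ gives $\mathbb{P}^{t,\rho_t^{\ast}}_{\bar x_t}(\mathrm{RA}\mid x_0)\geq 1\cdot \mathbb{P}(\bar w_t\in \boundeddist^t)=\beta$, where $\mathrm{RA}$ denotes the event $\{x_N\in \mathcal{T},\,x_{N-1}\in \mathcal{K},\ldots,x_{N-t+1}\in \mathcal{K}\}$ (using the natural re-indexing between the robust horizon $t$ and the stochastic horizon $N$).

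Corollary~\ref{corr:Prop2} gives $\rad_t(\mathcal{T},\mathcal{K},\boundeddist)\subseteq \mathcal{K}$, so $\ONE_{\mathcal{K}}(x_0)=1$. Because $V^{\ast}_{N-t}$ in~\eqref{eq:valFun}--\eqref{eq:valFunN} is the supremum over Markov policies for the final $t$ stages (by the dynamic-programming principle underlying~\eqref{eq:ra-value-fcn}), and the policy constructed above attains at least $\beta$, I conclude $V^{\ast}_{N-t}(x_0)\geq \beta$, i.e., $x_0\in \mathcal{L}_t(\beta)$. The main obstacle I anticipate is the measurability and conditioning bookkeeping: I must verify that the Borel-measurable policy supplied by Theorem~\ref{thm:LSCB} is genuinely admissible as a Markov policy in $\mathcal{M}$ so that it can legitimately compete in the supremum defining $V^{\ast}_{N-t}$, and that the conditional reach-avoid event in Theorem~\ref{thm:Prop1} corresponds, under the appropriate index shift, to exactly the event appearing inside $V^{\ast}_{N-t}$.
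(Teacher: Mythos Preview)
Your proposal is correct and follows essentially the same route as the paper: condition on $\{\bar w_t\in\boundeddist^t\}$, invoke Theorem~\ref{thm:Prop1} to get conditional probability one, use the i.i.d.\ assumption to obtain $\mathbb{P}(\bar w_t\in\boundeddist^t)=\beta$, and use Corollary~\ref{corr:Prop2} for $\ONE_{\mathcal{K}}(x_0)=1$. Your version is in fact slightly cleaner than the paper's, since you explicitly work with the robust-optimal Markov policy $\rho_t^\ast$ from Theorem~\ref{thm:LSCB} and then appeal to the supremum defining $V_{N-t}^\ast$, whereas the paper's displayed chain is written under $\pi^\ast$ but applies Theorem~\ref{thm:Prop1} (which is stated for $\rho_N^\ast$), leaving that substitution implicit.
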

\begin{proof}
	The case for $t=0$ follows trivially from \eqref{eq:valFunN}, \eqref{eq:sra-set}, and \eqref{eq:dra_recurs_lem1}. Let $t>0$ and $x \in \rad_{t}(
    \mathcal{T}, \mathcal{K}, \boundeddist)$. 
    We are interested in underapproximating
    $\mathcal{L}_{t}(\beta)=\{x:V_{N-t}^\ast(x)\geq \beta\}$ as defined in~\eqref{eq:sra-set}. From \eqref{eq:valFun}, 
	\begin{align}
        &V_{N-t}^\ast(x)\nonumber
        \\
        &= \mathbb{P}^{t,\pi^\ast}_{\bar{x}_{t}}(x_{N} \in \mathcal{T}, x_{N-1}
        \in \mathcal{K}, \dots, x_{N-t+1} \in \mathcal{K} | x)\ONE_{ \mathcal{K}}(x) \nonumber \\
                 &= \mathbb{P}^{t,\pi^\ast}_{\bar{x}_{t}}(x_{N} \in \mathcal{T}, x_{N-1}
        \in \mathcal{K}, \dots,\nonumber \\
        &\ \quad\quad\quad\quad\quad\quad\quad\quad\quad\quad\quad\quad x_{N-t+1} \in \mathcal{K} | x, \bar{w}_{t}\in \boundeddist^{t}) \nonumber\\
        & \hskip2.5cm \times \mathbb{P}_{\bar{w}_{t}}^{t}(\bar{w}_{t}\in \boundeddist^{t}) \nonumber \\
                 &\quad+\mathbb{P}^{t,\pi^\ast}_{\bar{x}_{t}}(x_{N}
        \in \mathcal{T}, x_{N-1} \in \mathcal{K}, \dots, \nonumber \\
        &\ \quad\quad\quad\quad\quad\quad\quad\quad\quad x_{N-t+1} \in \mathcal{K} | x, \bar{w}_{t}\in (\mathcal{W}^{t}\setminus\boundeddist^{t})) \nonumber \\
        &\hskip2.5cm\times \mathbb{P}_{\bar{w}_{t}}^{t}(\bar{w}_{t}\in
        (\mathcal{W}^{t}\setminus\boundeddist^{t}))\label{eq:totProb}\\
        &\geq  \mathbb{P}^{t,\pi^\ast}_{\bar{x}_{t}}(x_{N} \in \mathcal{T}, x_{N-1}
        \in \mathcal{K}, \dots, \nonumber \\
        &\ \quad\quad\quad\quad\quad\quad\quad\quad\quad\quad\quad\quad x_{N-t+1} \in \mathcal{K} | x, \bar{w}_{t}\in \boundeddist^{t}) \nonumber\\
        & \hskip2.5cm \times \mathbb{P}_{\bar{w}_{t}}^{t}(\bar{w}_{t}\in
        \boundeddist^{t}).\label{eq:LowerBound} 
    \end{align}
    Equation \eqref{eq:totProb} follows from the law of total probability and
    Corollary~\ref{corr:Prop2} which implies $\ONE_{ \mathcal{K}}(x)=1$.
    Equation \eqref{eq:LowerBound}
    follows from \eqref{eq:totProb} after ignoring the second term (which is
    non-negative).
    Simplifying \eqref{eq:LowerBound} using Theorem~\ref{thm:Prop1} and the i.i.d.
    assumption of the disturbance process, we obtain
    \begin{align}
                   V_{N-t}^\ast(x) &\geq \mathbb{P}_{\bar{w}_{t}}^{k}(\bar{w}_{t} \in \boundeddist^{t})=\left(\mathbb{P}_{w}(w_{k} \in \boundeddist)\right)^{t} =
        \beta. \label{eq:approx}
	\end{align}
    Thus, $\rad_{t}(\mathcal{T},\mathcal{K},\boundeddist) \subseteq
    \mathcal{L}_{t}(\beta)$ by \eqref{eq:sra-set}.
\end{proof}

Theorem~\ref{lem:bounded-conservative} solves Problem~\ref{prob:SRA} for an
arbitrary density $\psi_{w}$. Computation of
$\rad_{t}(\mathcal{T},\mathcal{K},\boundeddist)$ can be done via
Theorem~\ref{thm:DRA_recurs}. Note that $\boundeddist$ characterized by
Theorem~\ref{lem:bounded-conservative} is not unique. Recall that
Corollary~\ref{corr:maidens} states that the robust reach-avoid set is exact
viable set for the case when $ \boundeddist=\{0_n\}$. We therefore prescribe
$\boundeddist$ that contains $\{0_n\}$ and has the least Lebesgue measure to reduce the
the degree of conservativeness in Theorem~\ref{lem:bounded-conservative}. 
We also recommend the set $\boundeddist$ be convex and compact for computational ease.

Next, we provide a method to compute $\boundeddist\subseteq \mathcal{W}$ for any $t\in\mathbb{Z}_{[0,N-1]}$ such that $\mathbb{P}_{w}(w_{k} \in \boundeddist) = \beta^{\frac{1}{t}}$ for all $k\in \mathbb{Z}_{[0,t-1]}$ when the disturbance in the system \eqref{eq:nonlin-dist} is a Gaussian random vector.

\subsection{Computation of $\boundeddist$ for Gaussian disturbance}
\label{sub:bound-dist-comp}

Let the disturbance in \eqref{eq:nonlin-dist} be $w_k=v$, an $n$-dimensional
Gaussian random variable with mean vector $\mu$ and covariance matrix $\Sigma$.
The probability density of a multivariate Gaussian random vector is~\cite[Ch.
29]{billingsley_probability_1995}
\begin{align}
    \psi_{v}(s)={(2\pi)}^\frac{-n}{2}{\vert\Sigma\vert}^\frac{-1}{2}\exp{\left(-\frac{{(s-\mu)}^\top\Sigma^{-1}{(s-\mu)}}{2}\right)}.
    \nonumber
\end{align}

Consider the $n$-dimensional ellipsoid parameterized by $R^2\in[0,\infty)$ for $\boundeddist$,
\begin{align}
    \boundeddist_{R^2}=\left\{s\in \mathbb{R}^{n}: {(s-\mu)}^\top\Sigma^{-1}{(s-\mu)}\leq
R^2\right\}.\label{eq:ellipsoid}
\end{align}
For $\mu=0,\Sigma=r^2 I_n$, we have $\boundeddist_{R^2}=\{w: w^\top w \leq r^2 R^2 \}$,
an $n$-dimensional hypersphere of radius $rR$. We aim to compute the parameter
$R^2$ such that $\mathbb{P}_v\{v\in \boundeddist_{R^2}\}=\beta^\frac{1}{t}$ for
application of Theorem~\ref{thm:underapprox}.

Given a standard normal distributed $n$-dimensional random vector $\eta\sim N(0,I_{n})$,
$v={\Sigma}^\frac{1}{2}\eta+\mu$~\cite[Ch.
29]{billingsley_probability_1995}.  Also, $\boundeddist_{R^2}={\Sigma}^\frac{1}{2}
\boundeddist_{\eta,R^2}\oplus\{\mu\}$ with $\boundeddist_{\eta,R^2}=\{s\in
\mathbb{R}^{n}:s^\top s\leq R^2\}$. Since the affine transformation of $\eta$ to
$v$ is deterministic, $\mathbb{P}_v\{v\in\boundeddist_{R^2}\} =\mathbb{P}_{\eta}\{\eta\in \boundeddist_{\eta,R^2}\}=\beta^\frac{1}{t}$.  From~\cite[Ex.
20.16]{billingsley_probability_1995}, we have
$$F_{\chi^2(n)}(R^2)=\mathbb{P}\left\{\chi^2(n)\leq
R^2\right\}=\mathbb{P}\{\eta\in \boundeddist_{\eta,R^2}\}=\beta^\frac{1}{t}$$
where $\chi^2(n)$ is a chi-squared random variable with $n$ degrees of freedom
and $F_{\chi^2(n)}(\cdot)$ denotes its cumulative distribution function.
Consequently, we have
\begin{align}
    R^2=F^{-1}_{\chi^2(n)}\left(\beta^\frac{1}{t}\right).\label{eq:R2}
\end{align}
Equations \eqref{eq:ellipsoid} and \eqref{eq:R2} solves Problem~\ref{prob:Gaussian}.

\subsection{Computing the stochastic level-set underapproximation}

A pseudo-algorithm to compute the underapproximation of the $N$-time stochastic reach-avoid $\beta$-level-set is 
shown in Algorithm \ref{alg:dra-algorithm} using robust reach-avoid sets. Note that while the system dynamics permitted for Algorithm~\ref{alg:dra-algorithm} is the nonlinear system given in \eqref{eq:nonlin-dist}, the computation of $\reachd(S)$ is accessible only for linear system \eqref{eq:lin} as defined in \eqref{eq:reachOlinCts}. Further, Lemma~\ref{lem:dra-cvx} guarantees convexity and compactness of the robust reach-avoid sets, allowing for easy representation, only for linear system dynamics.

\begin{algorithm}[]
    \SetKwInOut{Input}{Input}\SetKwInOut{Output}{Output}

    \Input{Safe set, $\mathcal{K}$; target set, $\mathcal{T}$; system dynamics \eqref{eq:nonlin-dist}, desired probability level $\beta \in [0,1]$, Gaussian covariance matrix and mean, $\Sigma$, $\mu$; and time horizon, $N$}
    \Output{$N$-time stochastic reach-avoid $\beta$-level-set underapproximation, $\rad_{N}(\mathcal{K},\mathcal{T},\boundeddist)$}
    \vskip6pt
    Compute $\boundeddist$ for $\beta^{\frac{1}{N}}$ \hfill // from (\ref{eq:ellipsoid}), (\ref{eq:R2}) \\
    $\rad_{0}(\mathcal{K},\mathcal{T},\boundeddist) \leftarrow \mathcal{T}$ \\
    \For{$i = 1, 2, \dots, N$}{
        $S \leftarrow \rad_{i-1}(\mathcal{K},\mathcal{T},\boundeddist) \minkdiff \boundeddist$  \hfill // from \eqref{eq:dra_recursen} \\
        $R \leftarrow \reach(S)$ \hfill // from \eqref{eq:dFcapS}\\
        $\rad_{i}(\mathcal{K},\mathcal{T},\boundeddist) \leftarrow \mathcal{K} \cap R$ \hfill // from \eqref{eq:dra_recursen}
    }

    \caption{Underapproximation of the $N$-time stochastic reach-avoid $\beta$-level-set for
    system (\ref{eq:nonlin-dist}) with a Gaussian disturbance.}
    \label{alg:dra-algorithm}
\end{algorithm}

Since these sets are formed using Lagrangian techniques, Algorithm~\ref{alg:dra-algorithm} is more computationally efficient than
the dynamic programming based discretization approach.
Algorithm~\ref{alg:dra-algorithm} requires a number of basic geometric
operations. We will focus on the implementation of Algorithm~\ref{alg:dra-algorithm} in a polyhedral representation using the readily available MATLAB toolbox MPT.
From Lemma~\ref{lem:dra-cvx}, we note that support function methods can also be used~\cite{maidens_2013,bertsekas1971minimax}.
The conservativeness of the underapproximations obtained using Algorithm \ref{alg:dra-algorithm}
are very problem dependent. The system dynamics, strength of the disturbance process,
and size of the targe and safe sets can all have non-trivial affects on the
resulting conservativeness.

The robust reach-avoid set computation requires a Minkowski difference
operation as well as an intersection operation in the recursion \eqref{eq:dra_recursen}.
Hence both facet and vertex representations are typically required for polytopes 
and numerical implementations will be limited by the well-known vertex-facet
enumeration problem. Support functions would not be subject to this problem but
require analytic solutions to support vector calculations.
Minkowski
differences can be handled for polytopes using the MPT toolbox \cite{MPT3} but
implementation using ellipses is not feasible without further underapproximation.
Additional problems such as redundancy in
vertices and facets also commonly arise using polytope representations.

\section{Examples}
\label{sec:examples}

All results were obtained using the MPT toolbox \cite{MPT3} with MATLAB
R2016a running on Windows 7 computer with
and Intel Core i7-2600 CPU, 3.6 GHz, and 8 GB RAM.
We focus on examples in which clear comparisons of conservativeness can be made 
since the ability to handle high-dimensional systems is established in
\cite{maidens_2013}.

\subsection{2-Dimensional Double Integrator}
\label{sec:double-int-ex}

The first example considered is the stochastic viability analysis of a 2-dimensional discrete-time
double integrator model. This example can be solved with both the proposed Lagrangian
methods as well as with dynamic programming, allowing for direct comparisons of
conservativeness and speed.

The discretized double integrator dynamics are
\begin{equation}
  x_{k+1} = \left[ \begin{array}{cc}
    1 & T \\
    0 & 1
  \end{array}\right] x_{k} + \left[\begin{array}{c}
    \frac{T^{2}}{2} \\
    T
  \end{array}\right] u_{k} + w_{k}
  \label{eq:disys}
\end{equation}
The state $x_{k} \in \mathcal{X}\subseteq\mathbb{R}^{2}$, input $u_{k} \in 
\mathcal{U} \subseteq \mathbb{R}$, $T = 0.25$, and the disturbance is assumed to be i.i.d.
Gaussian $w_{k} \sim N(0,0.005 I_{2})$.

\begin{figure*}
  \begin{center}
    \includegraphics{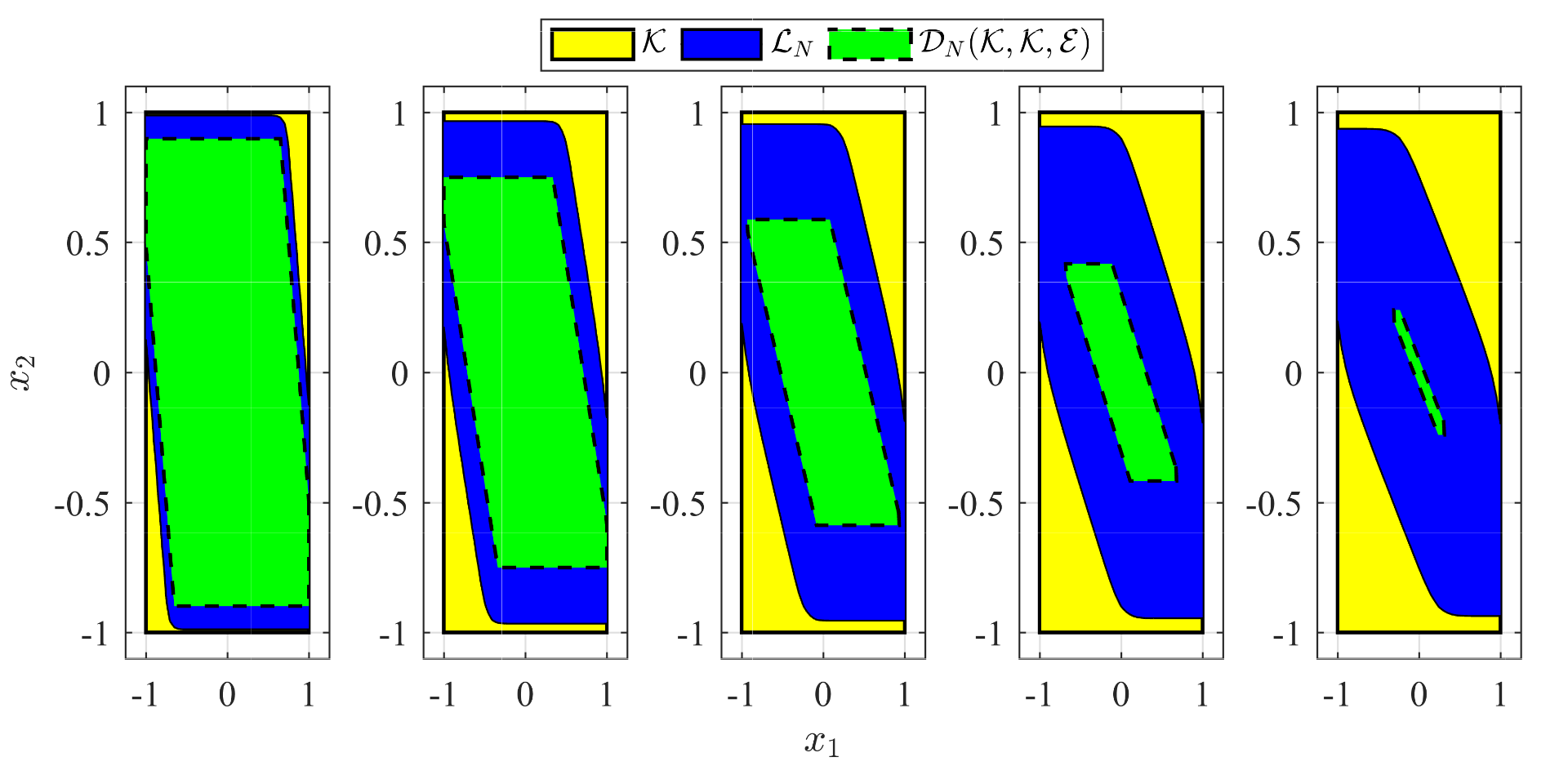}
  \end{center}
  \caption{Comparison between the dynamic programming $\beta=0.8$ level-sets and the
  		   Lagrangian underapproximation for a time horizon (from left) of $N=1,2,3,4,5$,
           for the double integrator system (\ref{eq:disys}) with a Gaussian
       disturbance $w_{k} \sim N(0, 0.005 I_2)$.}
  \label{fig:double-int}
\end{figure*}

Figure \ref{fig:double-int} compares the underapproximation via Algorithm 
\ref{alg:dra-algorithm}
and the level-sets computed using dynamic programming techniques, as in 
\cite{summers2010_verification}.The underapproximation
is closest and the approximation become progressively more conservative as $N$
increases, as is expected. For $N_{1}, N_{2} \in \mathbb{N}$, $N_{2} > N_{1}$,
$\beta^{\frac{1}{N_{2}}} > \beta^{\frac{1}{N_{1}}}$, and hence, from
Section \ref{sub:bound-dist-comp}, $R_{N_{2}} > R_{N_{1}}$, indicating that
$\boundeddist_{N_{1}} \subset \boundeddist_{N_{2}}$.

A comparison between the total computation time for the
dynamic programming method and Algorithm~\ref{alg:dra-algorithm} is provided in Table \ref{tab:di-simtimes}. The accuracy of dynamic programming relies on its grid size, resulting in a trade-off between accuracy and computation speed, from which Algorithm~\ref{alg:dra-algorithm} does not suffer. 

\begin{table}
  \begin{center}
    \begin{tabular}{|c|c|c|c|}
      \hline
      Grid Size & Dynamic Programming & Algorithm 1 & Ratio\\ \hline
      $41 \times 41$ & 8.16 & 0.98 & 8.3\\
      $82 \times 82$ & 59.76 & 0.98 & 60.9\\ \hline
    \end{tabular}
  \end{center}
  \caption{Computation times, in seconds,  for double integrator problem with dynamic programming
           and with Algorithm 1, and the ratio of computation times (dynamic programming by Algorithm 1). Algorithm 1 does not require a grid.}
  \label{tab:di-simtimes}
\end{table}

For systems with Gaussian disturbance processes that have a very low variance,
the underapproximation obtained through the Lagrangian methods tightly
approximates the stochastic level-set and is computed significantly faster---over $7$ times faster for a $41 \times 41$ grid. 
Figure~\ref{fig:dicomp-lowvar} shows a comparison of the stochastic level-set and
the Lagrangian underapproximation when the Gaussian disturbance process is
of the form $w_{k} \sim N(0, 10^{-5} I_{2})$. The bumps on the exterior
of the stochastic level-set are a numerical artifact from the state-space gridding.
\begin{figure}
    \includegraphics{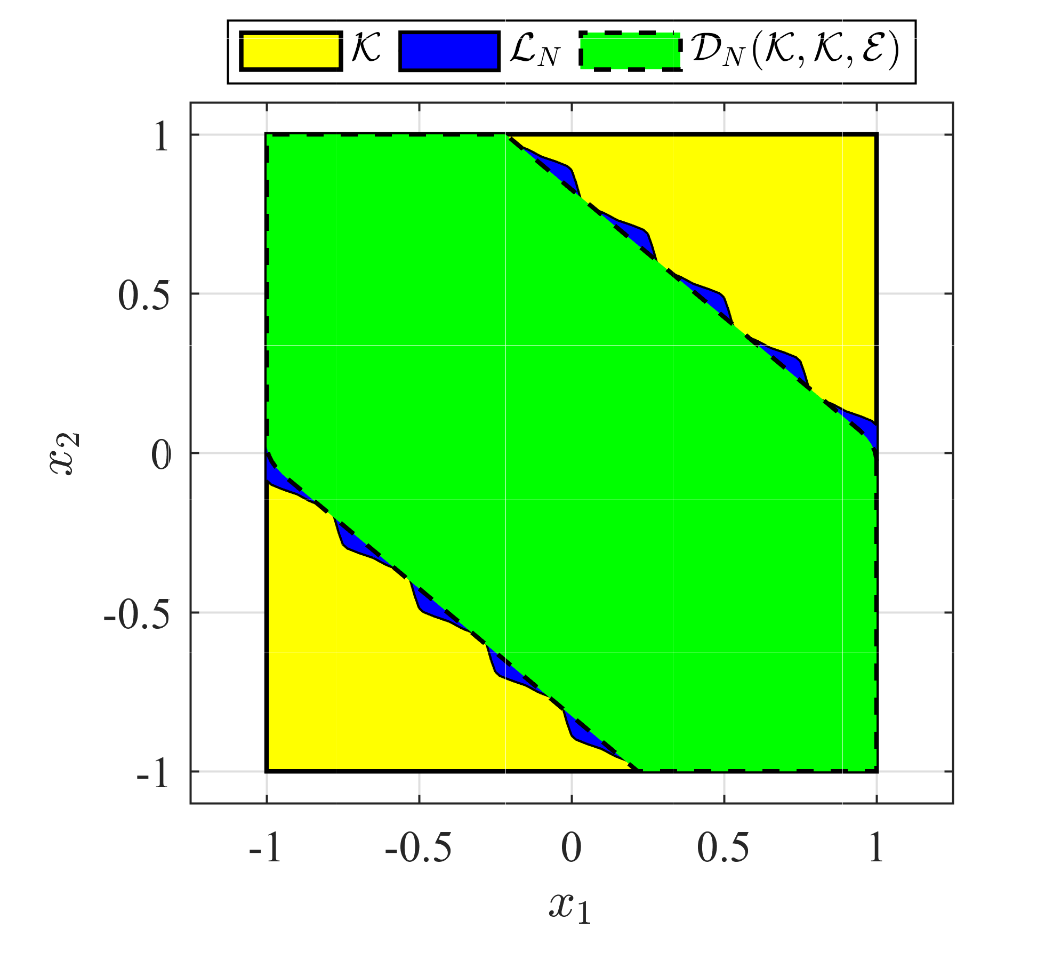}
  \caption{Comparison between the dynamic programming $\beta=0.8$ level-sets and
  the Lagrangian underapproximation at $N=5$, for the double integrator
  system (\ref{eq:disys}) with a Gaussian disturbance $w_{k} \sim N(0, 10^{-5} I_{2})$.}
    \label{fig:dicomp-lowvar}
\end{figure}

\subsection{Application to space-vehicle dynamics}

In this section, we compute an underapproximation of the stochastic
reach-avoid level-set for a spacecraft rendezvous docking problem using Algorithm~\ref{alg:dra-algorithm}. The goal is for a spacecraft, referred to as the deputy, to approach and dock
to an orbiting satellite, referred to as the chief, while remaining in a predefined line-of-sight
cone. The dynamics are described by the Clohessy-Wiltshire-Hill (CWH)
equations \cite{wiesel1989_spaceflight}
\begin{equation}
  \begin{split}
    \ddot{x} - 3 \omega x - 2 \omega \dot{y} = \frac{F_{x}}{m_{d}} \\
    \ddot{y} + 2 \omega \dot{x} = \frac{F_{y}}{m_{d}}
  \end{split}
  \label{eq:2d-cwh}
\end{equation}
The chief is located at the origin,
the position of the deputy is at $x,y \in \mathbb{R}$, $\omega = \sqrt{\mu/R_{0}^{3}}$
is the orbital frequency, $\mu$ is the gravitational constant, and $R_{0}$ is
the orbital radius of the spacecraft.

We define the state vector $z = [x,y,\dot{x},\dot{y}] \in \mathbb{R}^{4}$ and input
vector $u = [F_{x},F_{y}] \in \mathcal{U}\subseteq\mathbb{R}^{2}$. We discretize the dynamics (\ref{eq:2d-cwh}) in time to obtain the discrete-time LTI system,
\begin{equation}
  z_{k+1} = A z_{k} + B u_{k} + w_{k} \label{eq:lin-cwh}
\end{equation}
where $w_{k} \in \mathbb{R}^{4}$ is assumed to be a Gaussian i.i.d. disturbance with
$\mathbb{E}[w_{k}] = 0$, $\mathbb{E}[w_{k}w_{k}^\top] =
10^{-4}\times\mbox{diag}(1, 1, 0.0005, 0.0005)$.

We define the target set and the constraint set as in \cite{lesser2013_spacecraft}
\begin{align}
  \mathcal{T} &= \left\{ z \in \mathbb{R}^{4}: |z_{1}| \leq 0.1, -0.1 \leq z_{2} \leq 0, \right. \nonumber \\
              &\hskip1.9cm \left. |z_{3}| \leq 0.01, |z_{4}| \leq 0.01 \right\} \\
  \mathcal{K} &= \left\{z \in \mathbb{R}^{4}: |z_{1}| \leq z_{2}, |z_{3}| \leq 0.05, |z_{4}| \leq 0.05 \right\}\\
  \mathcal{U} &= [-0.1, 0.1] \times [-0.1, 0.1].
\end{align}

Figure \ref{fig:cwh-exampe} shows a cross-section at $\dot{x} = \dot{y} = 0$ of the
resulting 
underapproximation of the stochastic reach-avoid level set. The computation time for the 
$N=5$ level-set was 14.5 seconds. Because of the extreme computational requirements
of solving a 4-dimensional problem via dynamic programming methods we cannot make a direct computational comparison between dynamic programming and Algorithm 
\ref{alg:dra-algorithm}. In \cite[Figure 2]{lesser2013_spacecraft}, a
cross-section of $\dot{x} = \dot{y} = 0.9$ of the stochastic reach-avoid set was
approximated using convex, chance-constrained optimization and particle approximation
methods. Since both of these methods require gridding, the computation time is slower,
reported to be about 20 minutes, for a subset of the state space.
\begin{figure}
  \includegraphics{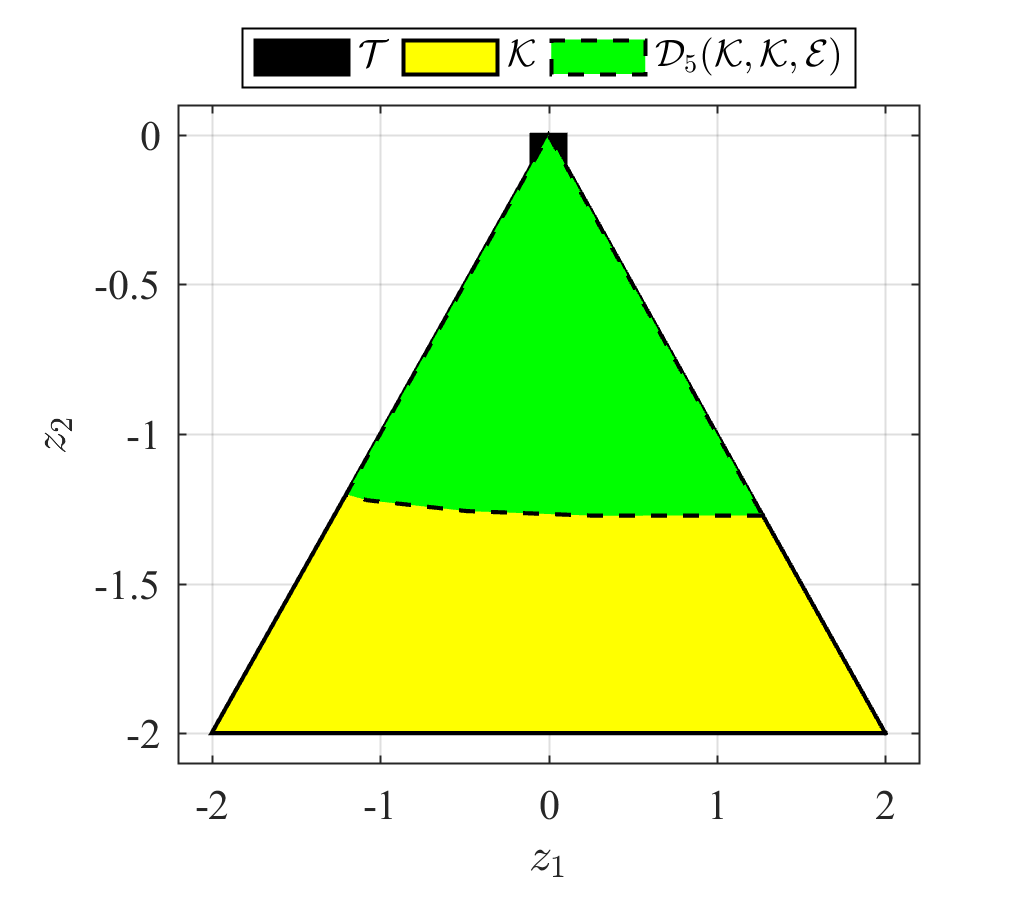}
  \caption{Cross-section of the underapproximation of the $N=5$ stochastic $\beta=0.8$ level-set for the
           spacecraft rendezvous docking problem (\ref{eq:lin-cwh}) at $\dot{x} = \dot{y} = 0$.}
  \label{fig:cwh-exampe}
\end{figure}

\section{Conclusion}
\label{sec:conlusion}

In this work, we provide a Lagrangian method to compute an underapproximation
of a stochastic reach-avoid level-set using robust reach-avoid sets. 
We synthesize approaches in~\cite{maidens_2013,saintpierre1994_viability} and~\cite{bertsekas1971minimax,kerrigan2001robust,rakovic2006_reach}, and characterize the sufficient conditions under which a optimal control policy
for the robust reach-avoid set is also a Markov policy. We demonstrate that our Lagrangian approach to compute the underapproximation is
significantly faster when compared to the dynamic programming approach.
The utility of this method is problem-dependent, as the conservativeness of the underapproximations
are affected by the system dynamics and noise processes.

In future, we intend to examine methods to reduce the conservativeness of the
underapproximation and extend the computation of the disturbance set
$\boundeddist$ in Theorem~\ref{thm:underapprox} for disturbances other than a
Gaussian random vector.

\bibliographystyle{IEEEtran}
\bibliography{IEEEabrv,shortIEEE,ref}

\end{document}